\newtheorem{theorem}{Theorem}
\newtheorem{lemma}[theorem]{Lemma}
\newtheorem{corollary}[theorem]{Corollary}
\theoremstyle{definition}
\newtheorem{definition}{Definition}
\newtheorem*{rem}{Remark}
\title{On the Relation of KS Entropy and Permutation Entropy}
\author[1]{Karsten Keller\thanks{Corresponding address: Institute of Mathematics, University of L\"ubeck,
Ratzeburger Alley 160, Building 64, 23562 L\"ubeck, Germany. Tel.: +49 451 500 3165; fax: +49 451 500 3373; e-mail: keller@math.uni-luebeck.de (K. Keller)}}
\author[1,2]{Anton M.~Unakafov}
\author[1,2]{Valentina A.~Unakafova}
\affil[1]{Institute of Mathematics, University of L\"ubeck}
\affil[2]{Graduate School for Computing in Medicine and Life Sciences, University of L\"ubeck}
\date{May 19, 2012}
\begin{document}
\maketitle

\begin{abstract}
\noindent Since Bandt et al.~have shown that the permutation entropy and the Kolmogo\-rov-Sinai entropy coincide for piecewise monotone interval maps,
the relationship of both entropies for time-discrete dynamical systems is of a certain interest.
The aim of this paper is a discussion of this relationship on the basis of an ordinal characterization of the Kolmogorov-Sinai entropy recently given.
\\

\noindent{\bf Keywords}: Kolmogorov-Sinai entropy, permutation entropy, ordinal patterns.
\end{abstract}
\section{Introduction}\label{intro}

\subsection{State of the art}
In their seminal paper \cite{BandtKellerPompe2002} Bandt et al.~have given a characterization of the
Kolmogorov-Sinai entropy (KS entropy) of a piecewise monotone interval map
on the basis of quantifying ordinal patterns in the dynamics of the map.
The central concept in their work is the permutation entropy introduced in \cite{BandtPompe2002}.

This concept, which was also generalized to the multidimensional case (see \cite{KellerSinn2010, Keller2011}),
allows a relatively simple and robust quantification of the complexity of a dynamical system.
In the case of piecewise monotone interval maps the permutation entropy coincides with the KS entropy \cite{BandtKellerPompe2002}.
Note that the consi\-deration of ordinal pattern distributions underlying a dynamical system provides interesting insights into the structure of the system.
For a general discussion, see Amigo \cite{Amigo2010}.

The relationship of KS entropy and permutation entropy is the central point of interest of this paper.
Whereas KS entropy has been shown to be not larger than permutation entropy
(see Keller and Sinn \cite{KellerSinn2009, KellerSinn2010, Keller2011} and Amigo et al.~\cite{AmigoKennelKocarev2005,Amigo2012}),
to our knowledge there is nothing known about the equality of these entropies beyond the case of piecewise monotone interval maps.
(Note that Amigo et al.~\cite{AmigoKennelKocarev2005,Amigo2012} have shown equality of KS entropy and permutation entropy for a concept of permutation entropy
that is qualitatively different from the one originally given.)
Here we discuss the relationship of KS entropy and permutation entropy from a structural viewpoint using an ordinal characterization of KS entropy recently provided in \cite{KellerSinn2009, KellerSinn2010, Keller2011}.

\subsection{Preliminaries}
In the whole paper $\left( \Omega, \mathbb{B}(\Omega), \mu, T \right)$ is a measure-preserving dynamical system, where
$\Omega$ is a non-empty topological space,
$\mathbb{B}(\Omega)$ is the Borel sigma-algebra on it,
$\mu:\mathbb{B}(\Omega) \rightarrow [0,1]$ is a probability measure
and $T: \Omega \hookleftarrow$ a $\mathbb{B}(\Omega)$-$\mathbb{B}(\Omega)$-measurable $\mu$-preserving map,
i.e.  $\mu(T^{-1}(B)) = \mu(B)$  for all $B \in \mathbb{B}(\Omega)$.

The {\it (Shannon) entropy} of a finite partition ${\mathcal P} = \{P_1, P_2, \ldots ,P_l\} \subset \mathbb{B}(\Omega)$ of $\Omega$ is defined by
\begin {equation*}
    H({\mathcal P}) = - \sum_{P \in {\mathcal P}} \mu(P) \ln \mu(P)
\end {equation*}
(with $\ln 0:=0$).

Given a finite partition ${\mathcal P} = \{P_1, P_2, \ldots ,P_l\} \subset \mathbb{B}(\Omega)$ of $\Omega$, consider the corresponding alphabet $A = \lbrace 1, 2, \ldots,l\rbrace$.
By assigning to each point $\omega\in P_a$ the symbol $a\in A$ and in the same manner to each of its iterates the corresponding symbol,
the dynamical system can be described by the language consisting of words over $A$ obtained from successive iterates.
Roughly speaking, the more complex this language is, the more complex is the dynamical system.
It is however necessary to consider different partitions in order to measure the `truth'.

Classifying points according to the words $a_1 a_2\ldots a_n$ obtained from the initial parts of their orbits, for each $n\in {\mathbb N}$ one obtains a partition
${\mathcal P}_n$ consisting of the sets
\begin{equation}\label{partdef}
    P_{a_1 a_2 ... a_n} = \lbrace \omega\in\Omega\,\mid\,\omega \in P_{a_1}, T(\omega) \in P_{a_2}, ..., T^{\circ n-1}(\omega) \in P_{a_n} \rbrace
\end{equation}
for $a_1,a_2,\ldots ,a_n\in A$. Here $T^{\circ t}$ denotes the $t$-th iterate of $T$.
The original partition coincides with ${\mathcal P}_1$, and the larger $n$ is the finer is the partition ${\mathcal P}_n$.

The {\it entropy rate} of $T$ with respect to $\mu$ and the partition ${\mathcal P}$ is given by
\begin {equation*}
    h_\mu(T, {\mathcal P}) = \lim_{n \rightarrow \infty} \frac{H({\mathcal P}_n)}{n}
    \text{.}
\end {equation*}
This limit is well-defined (see, e.g.~\cite{Walters82}).

\subsection{Kolmogorov-Sinai entropy and permutation entropy}
The {\it Kolmogo\-rov-Sinai entropy} ({\it KS entropy}) of $T$ with respect to $\mu$ is defined by
\begin {equation*}
    h_\mu(T) = \sup_{{\mathcal P} \subset \mathbb{B}(\Omega)\text{ finite partition of } \Omega}  h_\mu(T, {\mathcal P})
    \text{.}
\end {equation*}

Roughly speaking, it is the `maximal' possible information of the dynamical system that can be obtained from a symbolization by a finite alphabet.

It is often not easy to determine KS entropy,
since in the general case it is impossible to check all finite partitions of $\Omega$.
In a small number of cases one can find a generating partition ${\mathcal G}$, for which by the Kolmogorov-Sinai theorem it holds $h_\mu(T) = h_\mu(T, {\mathcal G})$.

Given a random vector ${\bf X}=(X_1,X_2,\ldots,X_N)$ on $(\Omega,{\mathbb B}(\Omega))$ with~$X_1,X_2,\linebreak\ldots,X_N: \Omega \rightarrow R$,
the {\it permutation entropy} $h_\mu^{\bf X}(T)$ with respect to ${\bf X}$ is defined by
\begin{align}\label{permEntropy}
    h_\mu^{\bf X}(T)=\varlimsup_{d\to\infty} \frac{H({\mathcal P}^{\bf X}(d))}{d},
\end{align}
where $({\mathcal P}^{\bf X}(d))_{d\in {\mathbb N}}$ is an increasing sequence of special finite partitions ${\mathcal P}^{\bf X}(d)\subset {\mathbb B}(\Omega)$ of $\Omega$
determined by the collection ${\bf X}$ of `observables' on the basis of considering order relations.
(Increasing means that ${\mathcal P}^{\bf X}(d\hspace{0.35mm}')$ is a refinement of ${\mathcal P}^{\bf X}(d)$ for $d\hspace{0.2mm}'\geq d$,
e.a.~each set in ${\mathcal P}^{\bf X}(d\hspace{0.2mm}')$ is contained in a set in ${\mathcal P}^{\bf X}(d)$.)

We do not provide the detailed description of the partitions ${\mathcal P}^{\bf X}(d)$ at this point.
Instead we refer to Definition \ref{opdef}.
The most important fact is that for certain choices of ${\bf X}$ these partitions determine the KS entropy of $T$
(compare \cite{KellerSinn2009,KellerSinn2010,Keller2011}):
\begin{align} \label{linkToKS}
    h_\mu (T)=\lim_{d\to\infty} h_\mu (T,{\mathcal P}^{\bf X}(d)).
\end{align}
For the possible choices of ${\bf X}$, see Theorems \ref{choice1}, \ref{choice2} and \ref{choice3}.

Having a closer look at the structure of \eqref{permEntropy} and \eqref{linkToKS}, we want to consider the following general problem:
\begin{align*}
    \text{When }h_\mu (T)=h_\mu^{\bf X}(T)\text{?}
\end{align*}

This paper is organized as follows. 
In Section~\ref{framework} we discuss the above problem in an abstract framework and establish Theorem \ref{the01} with Corollary~\ref{cor03} being
the main result of this paper.
Section~\ref{ordinal} gives the detailed descriptions of the partitions mentioned above and of the ordinal patterns on the basis of which these partitions are defined. 
Moreover, we provide conditions under which \eqref{linkToKS} is valid.
In Section~\ref{proof} we prove Theorem \ref{the01}.

\section{General framework}\label{framework}
It is useful to put the discussion into a more abstract context.
Let $({\mathcal P}(d))_{d\in {\mathbb N}}$ be a sequence of partitions ${\mathcal P}(d)\subset \mathbb{B}(\Omega)$ of $\Omega$ for which
\begin{align}\label{exist}
    \lim_{d\to\infty} h_\mu (T,{\mathcal P}(d))\text{ exists.}
\end{align}
Further, assume that
\begin{align}\label{finer}
    {\mathcal P}(d+n-1)\text{ is finer than }{\mathcal P}(d)_n\text{ for all }d,n\in {\mathbb N}\text{ with }n>2.
\end{align}

For the following, we only need \eqref{exist} and \eqref{finer} and have mainly a partition
$({\mathcal P}(d))_{d\in {\mathbb N}}=({\mathcal P}^{\bf X}(d))_{d\in {\mathbb N}}$ for some random vector ${\bf X}$ on $(\Omega,{\mathbb B}(\Omega))$ in mind.
(Given such a partition, \eqref{exist} is satisfied since $({\mathcal P}^{\bf X}(d))_{d\in {\mathbb N}}$ is increasing,
and \eqref{finer} holds according to Lemma \ref{finerlem} given in Section \ref{ordinal}.)

Under \eqref{exist} and \eqref{finer}, we interpret an element of the partition $({\mathcal P}(d))_{d\in {\mathbb N}}$
as the set of all $\omega\in\Omega$ providing a certain dynamical pattern of some length $d$.
(The starting point is not counted.)

Let $H_n(d):=H({\mathcal P}(d)_n)$.
Then $\frac{H_1(d)}{d}$ can be interpreted as the mean information per iterate contained in a pattern of length $d$ and
$\frac{H_2(d)}{1+d}$ that was contained in two successive patterns of length $d$
(taking into account that they describe $d$ identical iterates and that the second pattern holds new information about only one iterate).
More generally, $\frac{H_n(d)}{n+d-1}$ can be interpreted as the mean information per iterate contained in $n$ successive patterns.
Then
\begin{align}
    h(d):=\lim_{n\to\infty}\frac{H_n(d)}{n+d-1}=\lim_{n\to\infty}\frac{H_n(d)}{n}\label{hd}
\end{align}
is the entropy rate of $T$ with respect to ${\mathcal P}(d)$. Furthermore, by \eqref{finer} it holds
\begin{align}\label{finer2}
    H_n(d) \leq H_1(d + n - 1) \text{ for all }n, d \in \mathbb{N}.
\end{align}

We want to consider quantities $h$ and $h^\ast$ defined by
\begin{align}
    h:=\lim_{d\to\infty}h(d)=\lim_{d\to\infty}\lim_{n\to\infty}\frac{H_n(d)}{n},\label{h}\\
    h^\ast:=\varlimsup_{d\to\infty} \frac{H_1(d)}{d}.\label{hast}
\end{align}
In Section \ref{proof} we will prove the following statement:
\begin{theorem}\label{the01}
	For $n,d\in {\mathbb N}$, let $H_n(d), h(d), h, h^\ast$ be non-negative real numbers satisfying \eqref{hd}--\eqref{hast}.
    (`\,$:=$' has to be considered as `\,$=$'.)
	Then it holds $h\leq h^\ast$ and the following statements are equivalent:
	\begin{enumerate}
		\item [(i)]  $h = h^\ast$.
		\item [(ii)] For each $\varepsilon > 0$ there exist some $d_\varepsilon \in \mathbb{N}$ and some $M_\varepsilon \in \mathbb{R}$, such
			     that for each $d \geq d_\varepsilon$ the following holds:
			      \begin {align}
				    \frac{H_n(d)}{d + n - 1} < M_\varepsilon\text{ for all }n \in \mathbb{N}.
				    \label{eq:T1_statement1}\\
				    \text{There is some }n_d \in \mathbb{N}\text{ with }\frac{H_n(d)}{d + n - 1} > M_\varepsilon - \varepsilon\text{  for all $n \geq n_d$}
				    \text{.}
				    \label{eq:T1_statement2}
			      \end {align}
		\item [(iii)] For each $\varepsilon > 0$ there exists some $d_\varepsilon \in \mathbb{N}$ such that
			     for all $d \geq d_\varepsilon$ there is some $n_d \in \mathbb{N}$ with
			     \begin {equation}
				  H_1(d + n - 1) - H_n(d) < (n - 1)\varepsilon \text{ for all }n \geq n_d.
				  \label{eq:T1_statement3}
			     \end {equation}	
	\end{enumerate}
\end{theorem}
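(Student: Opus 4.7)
The plan is to verify $h \leq h^{\ast}$ first, and then to prove the two equivalences (i) $\Leftrightarrow$ (ii) and (i) $\Leftrightarrow$ (iii) separately. The inequality $h \leq h^{\ast}$ follows at once from \eqref{finer2}: dividing $H_n(d) \leq H_1(d+n-1)$ by $n+d-1$ and letting $n \to \infty$ for fixed $d$ gives $h(d) \leq \limsup_{m \to \infty} \frac{H_1(m)}{m} = h^{\ast}$, and letting $d \to \infty$ yields the claim. A useful by-product is the uniform bound $\frac{H_n(d)}{n+d-1} \leq \sup_{m \geq 1} \frac{H_1(m)}{m} < \infty$, which will legitimize the limsup manipulations below.

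To prove (i) $\Rightarrow$ (ii), assume $h = h^{\ast}$ and set $M_\varepsilon := h + \varepsilon/2$. I would choose $d_\varepsilon$ large enough that $\frac{H_1(m)}{m} < h^{\ast} + \varepsilon/2$ for all $m \geq d_\varepsilon$ (from the defining limsup) and $h(d) > h - \varepsilon/4$ for all $d \geq d_\varepsilon$ (from $h(d) \to h$). The bound \eqref{finer2} then gives \eqref{eq:T1_statement1} for every $n \in \mathbb{N}$, and the pointwise convergence $\frac{H_n(d)}{n+d-1} \to h(d) > M_\varepsilon - 3\varepsilon/4$ supplies some $n_d$ beyond which \eqref{eq:T1_statement2} holds. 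Conversely, for (ii) $\Rightarrow$ (i), I specialize \eqref{eq:T1_statement1} at $n = 1$ to obtain $h^{\ast} \leq M_\varepsilon$, and pass to $n \to \infty$ in \eqref{eq:T1_statement2} to get $h(d) \geq M_\varepsilon - \varepsilon$; together these give $h \geq M_\varepsilon - \varepsilon \geq h^{\ast} - \varepsilon$, and arbitrariness of $\varepsilon$ combined with $h \leq h^\ast$ forces $h = h^\ast$.

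The equivalence (i) $\Leftrightarrow$ (iii) rests on the identity
\[
\limsup_{n \to \infty} \frac{H_1(d+n-1) - H_n(d)}{n-1} \;=\; h^{\ast} - h(d),
\]
valid for each fixed $d$. To prove it, I write the two pieces as $\frac{H_1(d+n-1)}{n+d-1} \cdot \frac{n+d-1}{n-1}$ and $\frac{H_n(d)}{n+d-1} \cdot \frac{n+d-1}{n-1}$, noting that the second factor tends to $1$, that $\frac{H_n(d)}{n+d-1} \to h(d)$, and that $\limsup_n \frac{H_1(d+n-1)}{n+d-1} = h^{\ast}$ (a shift of the defining limsup), and then appeal to the fact that $\limsup$ is additive with respect to a convergent summand. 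With the identity in hand, (i) $\Rightarrow$ (iii) is immediate: $h = h^\ast$ gives $h^\ast - h(d) < \varepsilon/2$ for all large $d$, so the strict bound in \eqref{eq:T1_statement3} holds eventually in $n$. Conversely, (iii) at level $\varepsilon$ forces $h^\ast - h(d) \leq \varepsilon$ for all $d \geq d_\varepsilon$, whence $h \geq h^\ast - \varepsilon$; letting $\varepsilon \downarrow 0$ gives $h = h^\ast$.

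The only genuinely delicate step I anticipate is the displayed limsup identity: interchanging $\limsup$ with convergent factors and rescaling denominators invites slips, so it must be executed carefully using the uniform bound from the first paragraph. The remaining work is straightforward $\varepsilon$-bookkeeping built on the defining limits of $h$, $h^\ast$ and on $h(d) \to h$.
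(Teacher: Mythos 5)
Your proposal is correct, but it is organized differently from the paper's proof. The paper establishes the cycle (i) $\Rightarrow$ (ii) $\Rightarrow$ (iii') $\Rightarrow$ (i), where (iii') is the variant of (iii) with $(d+n-1)\varepsilon$ in place of $(n-1)\varepsilon$, and the reduction of (iii) to (iii') is handled by a separate preliminary step that picks $n_d>d$ and halves $\varepsilon$. You instead prove the two equivalences (i) $\Leftrightarrow$ (ii) and (i) $\Leftrightarrow$ (iii) independently, and your key tool for the latter is the identity $\varlimsup_{n\to\infty}\frac{H_1(d+n-1)-H_n(d)}{n-1}=h^\ast-h(d)$, which absorbs the change of denominator automatically because $\frac{d+n-1}{n-1}\to 1$ for fixed $d$; this identity is legitimate given the uniform bound $\sup_m H_1(m)/m<\infty$ that you extract at the outset, and it turns both directions of (i) $\Leftrightarrow$ (iii) into one-line consequences of $h(d)\to h$. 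Two further simplifications: your proof of $h\le h^\ast$ (divide \eqref{finer2} by $d+n-1$, let $n\to\infty$, then $d\to\infty$) is more direct than the paper's Lemma~\ref{lemma3}, which runs an $\alpha,\beta$-comparison; and you bypass the paper's Lemma~\ref{th:PE_without_sup} (that under (i) the upper limit defining $h^\ast$ is a genuine limit), since the $\varlimsup$ already yields the one-sided bound $H_1(m)/m<h^\ast+\varepsilon/2$ for all large $m$, which is all that (i) $\Rightarrow$ (ii) needs. What the paper's route buys is that every step is explicit $\varepsilon$-bookkeeping with concrete thresholds (e.g.\ $n_d\ge \frac{4M_\varepsilon(d-1)}{\varepsilon}$); what yours buys is brevity, at the cost of concentrating the delicacy in the $\varlimsup$ algebra, which you have correctly flagged and justified.
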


As an immediate consequence of Theorem \ref{the01} one gets
\begin{corollary} \label{cor02}
	Let $({\mathcal P}(d))_{d\in {\mathbb N}}$ be a sequence of finite partitions ${\mathcal P}(d)\subset {\mathbb B}(\Omega)$ of $\Omega$ with \eqref{exist} and \eqref{finer}.
	Then it holds
	\begin {equation*}
	    \lim_{d\to\infty} h_\mu (T,{\mathcal P}(d))\leq\varlimsup_{d\to\infty} \frac{H({\mathcal P}(d))}{d}.
	\end {equation*}
	Moreover, the equality
	\begin {equation} \label{meq}
	    \lim_{d\to\infty} h_\mu (T,{\mathcal P}(d)) = \varlimsup_{d\to\infty} \frac{H({\mathcal P}(d))}{d}
	\end {equation}
	holds iff for each $\varepsilon > 0$ there exists some $d_\varepsilon \in \mathbb{N}$ such that for all $d \geq d_\varepsilon$ there is some $n_d \in \mathbb{N}$ with
	\begin {equation*}
	    H({\mathcal P}(d+n-1)) - H({\mathcal P}(d)_n) < (n-1)\varepsilon \text{ for all }n \geq n_d.
	\end {equation*}	
\end{corollary}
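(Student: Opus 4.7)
The plan is to apply Theorem \ref{the01} with the assignment $H_n(d) := H({\mathcal P}(d)_n)$, so that in particular $H_1(d) = H({\mathcal P}(d))$. First I would verify the hypotheses of the theorem under this identification. The $H_n(d)$ are Shannon entropies, hence non-negative reals; by the standard existence of the entropy rate the limit $h(d) := \lim_n H_n(d)/n$ of \eqref{hd} exists and coincides with $h_\mu(T,{\mathcal P}(d))$. Hypothesis \eqref{exist} of the corollary therefore supplies the existence of $h = \lim_d h(d) = \lim_d h_\mu(T,{\mathcal P}(d))$ in the sense of \eqref{h}, and $h^\ast = \varlimsup_d H_1(d)/d = \varlimsup_d H({\mathcal P}(d))/d$ in the sense of \eqref{hast} is defined unconditionally.

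The remaining condition \eqref{finer2} required by Theorem \ref{the01} follows from hypothesis \eqref{finer} by monotonicity of Shannon entropy under refinement: whenever ${\mathcal P}(d+n-1)$ refines ${\mathcal P}(d)_n$ we have $H({\mathcal P}(d)_n) \leq H({\mathcal P}(d+n-1))$, i.e.\ $H_n(d) \leq H_1(d+n-1)$. The small cases in which \eqref{finer} is not literally stated (for $n \in \{1,2\}$) are either trivial or directly inherited from the increasing/refining structure of $({\mathcal P}(d))_{d\in\mathbb{N}}$, so \eqref{finer2} holds throughout.

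With the setup in place the corollary is a transcription of Theorem \ref{the01}. The asserted inequality $\lim_d h_\mu(T,{\mathcal P}(d)) \leq \varlimsup_d H({\mathcal P}(d))/d$ is precisely $h \leq h^\ast$. For the characterisation of equality I would invoke the equivalence $(\text{i}) \Leftrightarrow (\text{iii})$: statement (i), $h = h^\ast$, is exactly equation \eqref{meq}, while statement (iii), after the substitutions $H_1(d+n-1) = H({\mathcal P}(d+n-1))$ and $H_n(d) = H({\mathcal P}(d)_n)$, is word for word the displayed condition of the corollary. Since all the analytical content has already been absorbed into Theorem \ref{the01}, no genuine obstacle remains; the only thing to watch is that the translation between the two sets of notation and the implication \eqref{finer}$\Rightarrow$\eqref{finer2} are carried out carefully.
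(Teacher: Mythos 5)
Your proposal is correct and is exactly the argument the paper intends: the corollary is stated as an ``immediate consequence'' of Theorem \ref{the01} under the identification $H_n(d)=H({\mathcal P}(d)_n)$, with \eqref{finer2} obtained from \eqref{finer} via monotonicity of Shannon entropy under refinement and the equality criterion read off from the equivalence (i)$\Leftrightarrow$(iii). Your remark about the small values of $n$ not literally covered by \eqref{finer} is a fair observation about the paper's own phrasing and does not affect the validity of the argument.
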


\begin{rem}
    Inequality \eqref{eq:T1_statement3} can be rewritten as follows:
    \begin {equation*}
	    \sum_{k=1}^{n-1} \left( H_{k}(d+n-k) - H_{k+1}(d+n-(k+1)) \right) < (n - 1)\varepsilon \text{ for all }n \geq n_d.
    \end {equation*}	
    From this representation one can see that under the assumptions of Corollary \ref{cor02}
    the following statement presents a potentially helpful sufficient condition for \eqref{meq}. 
    For each $\varepsilon > 0$ there exists some $d_\varepsilon \in \mathbb{N}$ such that for all $d \geq d_\varepsilon$ it holds
    \begin {equation} \label{suffCondition}
	 H({\mathcal P}(d + 1)_{n-1}) - H({\mathcal P}(d)_n) < \varepsilon \text{ for all }n>1.
    \end {equation}
    Note that if ${\mathcal P}(d + 1)_{n-1}$ is finer than ${\mathcal P}(d)_n$ (for the partitions ${\mathcal P}^{\bf X}(d); d\in {\mathbb N}$ this holds according to Lemma \ref{finerlem}),
    the left-hand part of \eqref{suffCondition} is no more than the conditional entropy $H({\mathcal P}(d+1)_{n-1} \mid {\mathcal P}(d)_{n})$,
    i.e. the amount of new information obtained from ${\mathcal P}(d+1)_{n-1}$ given that obtained from ${\mathcal P}(d)_{n}$.
\end{rem}

For the relationship of KS entropy and permutation entropy, Corollary \ref{cor02} provides the following
\begin{corollary} \label{cor03}
	For each ${\mathbb R}$-valued random vector ${\bf X}=(X_1,X_2,\ldots ,X_N)$ on $(\Omega,{\mathbb B}(\Omega))$, it holds
	\begin {equation*}
	    \lim_{d\to\infty} h_\mu (T,{\mathcal P}^{\bf X}(d))\leq h_\mu^{\bf X}(T).
	\end {equation*}
	Moreover, if \eqref{linkToKS} is valid, then the following statements are equivalent:
	\begin{enumerate}
		\item[(i)] $h_\mu (T) = h_\mu^{\bf X}(T)$.
		\item[(ii)] For each $\varepsilon > 0$ there exists some $d_\varepsilon \in \mathbb{N}$ such that
			    for all $d \geq d_\varepsilon$ there is some $n_d \in \mathbb{N}$ with
			    \begin {equation*}
				H({\mathcal P}^{\bf X}(d+n-1)) - H({\mathcal P}^{\bf X}(d)_n) < (n-1)\varepsilon \text{ for all }n \geq n_d.
			    \end {equation*}
	\end{enumerate}
\end{corollary}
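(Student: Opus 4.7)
The plan is to recognize that Corollary~\ref{cor03} is an immediate specialization of Corollary~\ref{cor02} to the specific sequence of ordinal partitions $({\mathcal P}^{\bf X}(d))_{d\in {\mathbb N}}$. So the proof reduces to verifying that this sequence falls under the hypotheses \eqref{exist} and \eqref{finer} of Corollary~\ref{cor02}, and then rewriting the resulting inequality and equivalence in the notation of permutation entropy.

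First I would check \eqref{exist}. Because the paper establishes (and recalls just before Definition~\ref{opdef} is referenced) that $({\mathcal P}^{\bf X}(d))_{d\in {\mathbb N}}$ is an increasing sequence of partitions, the map $d \mapsto h_\mu(T,{\mathcal P}^{\bf X}(d))$ is non-decreasing; it is bounded above by $h_\mu(T)$ in the case relevant to the equivalence (where \eqref{linkToKS} is assumed), so the limit exists. Condition \eqref{finer} is exactly the content of Lemma~\ref{finerlem}, explicitly cited in the remark following Corollary~\ref{cor02} as the justification that ordinal partitions satisfy the abstract hypothesis. Thus Corollary~\ref{cor02} applies verbatim.

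Next I would translate the conclusion of Corollary~\ref{cor02} into the permutation-entropy language. By the definition \eqref{permEntropy} of permutation entropy, the right-hand side $\varlimsup_{d\to\infty} H({\mathcal P}^{\bf X}(d))/d$ in Corollary~\ref{cor02} is precisely $h_\mu^{\bf X}(T)$. Hence the inequality in Corollary~\ref{cor02} specializes to
\begin{equation*}
\lim_{d\to\infty} h_\mu(T,{\mathcal P}^{\bf X}(d)) \leq h_\mu^{\bf X}(T),
\end{equation*}
which is the first claim of Corollary~\ref{cor03}. For the equivalence, under the assumption \eqref{linkToKS} the left-hand side of \eqref{meq} coincides with $h_\mu(T)$, while the right-hand side is $h_\mu^{\bf X}(T)$, so equality \eqref{meq} is equivalent to $h_\mu(T)=h_\mu^{\bf X}(T)$. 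The equivalence in Corollary~\ref{cor02} with the $\varepsilon$-condition on $H({\mathcal P}(d+n-1))-H({\mathcal P}(d)_n)$ then yields exactly the equivalence (i)$\Leftrightarrow$(ii) of Corollary~\ref{cor03}.

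There is no real obstacle here: once Corollary~\ref{cor02} is established and Lemma~\ref{finerlem} supplies \eqref{finer} for the ordinal partitions, the proof is essentially a substitution of notation and an appeal to the definition of $h_\mu^{\bf X}(T)$. The only point to be slightly careful about is that the equivalence is stated under the standing hypothesis \eqref{linkToKS}, which is what allows the left-hand side of \eqref{meq} to be identified with $h_\mu(T)$ rather than merely with the limit of the partition entropies.
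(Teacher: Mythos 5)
Your proposal is correct and matches the paper exactly: the paper states Corollary~\ref{cor03} as an immediate specialization of Corollary~\ref{cor02} to the ordinal partitions, with \eqref{exist} following from monotonicity of the sequence $({\mathcal P}^{\bf X}(d))_d$, \eqref{finer} supplied by Lemma~\ref{finerlem}, and the translation $\varlimsup_{d\to\infty} H({\mathcal P}^{\bf X}(d))/d = h_\mu^{\bf X}(T)$ together with \eqref{linkToKS} identifying the two sides of \eqref{meq}.
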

Corollary \ref{cor03} has to be considered together with Theorem \ref{choice3} describing cases where \eqref{linkToKS} is satisfied.

In order to illustrate Corollary \ref{cor03}, we present results for the logistic map $T:[0,1]\hookleftarrow$ defined by $T(x)=4x(1-x)$, based on numerical simulation.
Here $\mu$ is assumed to be the invariant measure with density $\frac{1}{\pi\sqrt{x(1-x)}}$ with respect to the equidistribution on $[0,1]$ and $X: [0,1]\rightarrow {\mathbb R}$ to be the identity.

Estimations of $H_n(d)$ for different values of $d$ and $n$ are given by the corresponding empirical entropies, computed from an orbit of length $10^8$ of a pseudo-random point in $[0,1]$ (with respect to the equidistribution). Since the logistic map is ergodic with respect to $\mu$, distributions of such orbits differ only slightly.
We restricted $d+n-1$ to maximally $16$ because computer memory consumption increases fast with the growth of $(d+n)$.

According to the result of Bandt et al.~\cite{BandtKellerPompe2002}, statement (i) of Corollary \ref{cor03} is valid for $T$, hence statement (ii) is valid too.
Figure \ref{figure1} illustrates (slow) convergence of $\frac{H_1(d)}{d}$ to the KS entropy of $T$, which is equal to $\ln 2$ (see e.g.~\cite{Amigo2010}).
Figure \ref{figure2} illustrates that the terms $\frac{H_1(d+n-1)-H_n(d)}{n}$ converge (slowly) to zero for increasing $n$.
\begin{figure}[h]
      \centering
      \includegraphics[scale=0.6]{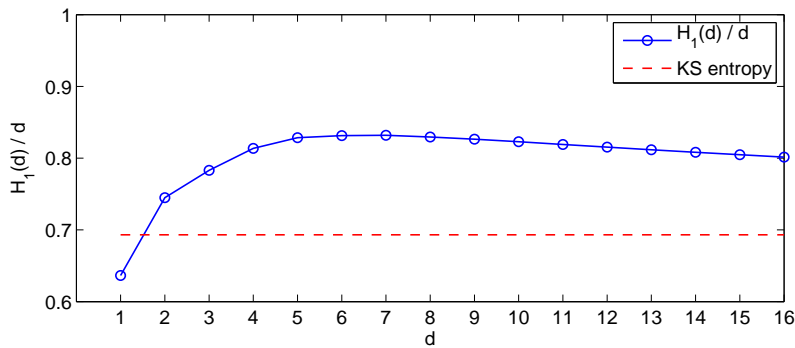}
      \caption{ Values of $\frac{H_1(d)}{d}$ in comparison to the KS entropy (dotted line)}
      \label{figure1}
\end{figure}
\begin{figure}[h]
      \centering
      \includegraphics[scale=0.6]{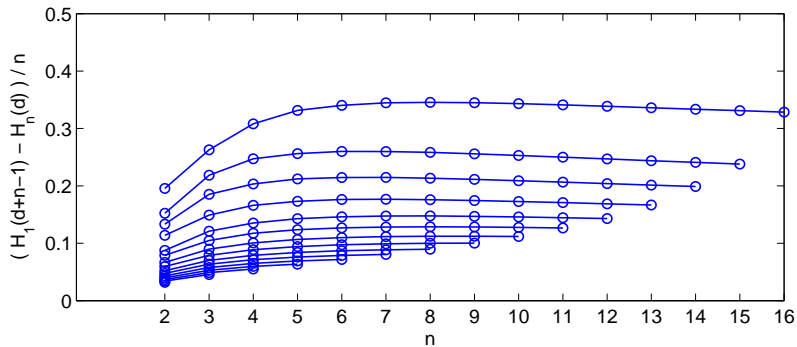}
      \caption{ Values of $\frac{H_1(d+n-1)-H_n(d)}{n}$ for logistic map for $d=1,2,\ldots, 16$ (from lower to upper curves) }
      \label{figure2}
\end{figure}

\section {Ordinal patterns, ordinal partition and permutation entropy}\label{ordinal}
The purpose of this section is to provide a brief review of ordinal patterns and on this basis to describe in detail the partitions considered in the Introduction.
Moreover, we summarize results relating KS and permutation entropies.

\subsection{Ordinal patterns}
We start from the definition of ordinal patterns.
Note that different authors determine this notion slightly differently.
Here we follow the definition given in \cite{Keller2011}.

\begin {definition}
    For $d \in \mathbb{N}$ denote the set of permutations of $\lbrace 0, 1, 2, ..., d\rbrace$ by $\Pi_d$.
    We say that a real vector $(x_0, x_1, ..., x_d)\in {\mathbb R}^{d+1}$
    has {\it ordinal pattern $\pi = (r_0, r_1,\ldots, r_d) \in \Pi_d$ of order $d$} if
    \begin {equation*}
	x_{r_0} \geq x_{r_1} \geq ... \geq x_{r_d}
    \end {equation*}
    and
    \begin {equation}\label{equalco}
	r_{l-1} > r_{l} \text{ in the case } x_{r_{l-1}} = x_{r_{l}}
	\text{.}
     \end {equation}
\end {definition}

\begin{rem}
Note that from the set of real numbers we only use that it is totally ordered; in other words, ordinal patterns can be defined for finite sequences with elements from any totally ordered set.
\end{rem}

Ordinal patterns describe all order relations between the components of a $(d+1)$-dimensional vector.
In this sense, permutations are only used as a representation of the `order type' of a vector, rather natural, but not unique (see, for instance, \cite{HarunaNakajima2011,KellerSinnEmonds2007}).
By definition, there are $(d+1)!$ different such patterns.
Also note that the treatment of equal values according to \eqref{equalco} is arbitrary, but convenient from the computational viewpoint (see \cite{KellerSinnEmonds2007}).
In many situations the probability of equal values is (near to) zero making treatment of equality redundant.

\subsection{Ordinal partitions}
Given a measure preserving dynamical system and a collection of `observables' on the system, we define now a partition for each $d\in {\mathbb N}$.

\begin{definition}\label{opdef}
  For $N\in {\mathbb N}$, let ${\bf X}=(X_1,X_2,\ldots ,X_N)$ be a ${\mathbb R}$-valued random vector on $(\Omega,{\mathbb B}(\Omega))$.
  Then, for $d\in {\mathbb N}$, the partition
  \begin{align*}
	{\mathcal P}^{\bf X}(d)=\{P_{(\pi_1,\pi_2,\ldots ,\pi_N)}\,\mid\,\pi_i\in\Pi_d\text{ for }i=1,2,\ldots ,N\}
  \end{align*}
  \hspace*{5mm} with
  \begin{align*}
      \vspace{5mm}P_{(\pi_1,\pi_2,\ldots ,\pi_N)}=\{\omega\in\Omega\,\mid\,(X_i(T^{\circ d}(\omega)),X_i(T^{\circ {d-1}}(\omega)),\ldots ,X_i(T(\omega)),X_i(\omega))\hspace*{2.3mm}\\
      \text{has ordinal pattern }\pi_i\text{ for }i=1,2,\ldots ,N\}
  \end{align*}
  is called {\it ordinal partition of order $d$} with respect to $T$ and ${\bf X}$.
\end{definition}

${\mathcal P}^{\bf X}(d)$ classifies the points of $\Omega$ according to the ordinal patterns `measured' by the `observables' $X_i$.
The vectors, from which the ordinal patterns are taken, are considered in inverse time order, ensuring compatibility with previous related papers.
Here the idea was to use only present values and values from the past in order to be `causal'.

Note that the partition ${\mathcal P}^{\bf X}(d)$ corresponds to the alphabet $(\Pi_d)^N$ and ${\mathcal P}^{\bf X}(d)_n$ to the set of words of length $n$ over the alphabet $(\Pi_d)^N$ (compare \eqref{partdef}).

\begin{lemma}\label{finerlem}
    Let {\bf X} be given as in Definition~\ref{opdef}, and for $n,d\in {\mathbb N}$ let ${\mathcal P}(d)_n={\mathcal P}^{\bf X}(d)_n$.
    Then, for all $n>1$, the partition ${\mathcal P}(d+1)_{n-1}$ is finer than the partition ${\mathcal P}(d)_n$.
    Moreover, it holds \eqref{finer}.
\end{lemma}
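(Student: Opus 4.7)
The plan is to reduce the lemma to a statement about single ordinal patterns and then track indices carefully. First I would unfold the definitions: the cell of $\mathcal{P}^{\mathbf{X}}(d)_n$ containing $\omega$ is determined by the $N \cdot n$ data
\begin{equation*}
 \pi_i^{(k)}(d,\omega) := \text{ordinal pattern of }(X_i(T^{\circ d+k}\omega),X_i(T^{\circ d+k-1}\omega),\ldots ,X_i(T^{\circ k}\omega)),
\end{equation*}
for $i=1,\ldots,N$ and $k=0,\ldots,n-1$, while the cell of $\mathcal{P}^{\mathbf{X}}(d+1)_{n-1}$ containing $\omega$ is determined by the $N\cdot(n-1)$ data $\pi_i^{(k)}(d+1,\omega)$ for $i=1,\ldots,N$ and $k=0,\ldots,n-2$. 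Since the $N$ observables contribute independently to the partitions, it suffices to argue coordinate-by-coordinate and thus to reduce to $N=1$.

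The real content is then a purely combinatorial fact about ordinal patterns which I would isolate as a sub-claim: the ordinal pattern of a vector $(x_0,x_1,\ldots,x_m)$ determines both the ordinal pattern of its \emph{head} $(x_0,\ldots,x_{m-1})$ and of its \emph{tail} $(x_1,\ldots,x_m)$. This is immediate from the definition: taking the permutation $\pi=(r_0,\ldots,r_m)$ associated with $(x_0,\ldots,x_m)$, the head-pattern is obtained by deleting the single entry equal to $m$ from $\pi$, and the tail-pattern is obtained by deleting the entry equal to $0$ and subtracting $1$ from every remaining entry. One only has to check that the tie-breaking convention \eqref{equalco} is preserved under these deletions, but that is clear because the relative order of the surviving indices is not affected.

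Applying this sub-claim to $\pi^{(k)}(d+1,\omega)$, which is the pattern of a vector of length $d+2$, I obtain:
\begin{itemize}
 \item by dropping the head entry, the pattern $\pi^{(k)}(d,\omega)$, for each $k=0,1,\ldots ,n-2$;
 \item by dropping the tail entry of the single vector $\pi^{(n-2)}(d+1,\omega)$, the remaining pattern $\pi^{(n-1)}(d,\omega)$.
\end{itemize}
Thus all $n$ data required to identify the cell of $\mathcal{P}^{\mathbf{X}}(d)_n$ containing $\omega$ are functions of the $(n-1)$ data identifying its cell in $\mathcal{P}^{\mathbf{X}}(d+1)_{n-1}$, proving that the latter refines the former.

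Finally, \eqref{finer} follows by a straightforward induction: iterating the refinement just proved yields that $\mathcal{P}^{\mathbf{X}}(d+k)_{n-k}$ refines $\mathcal{P}^{\mathbf{X}}(d)_n$ for each $k=1,\ldots,n-1$, and taking $k=n-1$ gives $\mathcal{P}^{\mathbf{X}}(d+n-1)$ as a refinement of $\mathcal{P}^{\mathbf{X}}(d)_n$. The only real obstacle is the bookkeeping of indices (time-shifts versus pattern lengths, together with the reversed time order in Definition~\ref{opdef}); once the head/tail sub-claim is formulated correctly, everything is mechanical.
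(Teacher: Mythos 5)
Your proof is correct and follows essentially the same route as the paper's: the paper argues contrapositively (two points separated by ${\mathcal P}(d)_n$ must be separated by ${\mathcal P}(d+1)_{n-1}$) using exactly the same observation that the order-$(d+1)$ pattern at a suitably chosen position $k$ determines the order-$d$ pattern at position $t$, with the same endpoint case (your $\pi^{(n-1)}(d,\omega)$ recovered from $\pi^{(n-2)}(d+1,\omega)$, the paper's ``$k=n-1$ if $t=n$'') handled by dropping the other end of the window, and the same chaining argument for \eqref{finer}. Your isolated head/tail sub-claim with the explicit check that the tie-breaking convention \eqref{equalco} survives deletion is simply a more detailed rendering of what the paper dismisses with ``one easily sees''.
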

\begin{proof}
Let $d,n\in {\mathbb N}$ with $n>1$ and let $\omega,\omega'\in\Omega$ be in different sets of the partition ${\mathcal P}(d)_n$.
Then there exist some $i\in\{1,2,\ldots ,N\}$ and some $t\in \{0,1,\ldots ,n\}$ such that the ordinal patterns of order $d$ of
$( X_i(T^{\circ t+d}(\omega )), \linebreak X_i(T^{\circ {t+d-1}}(\omega )),\ldots, X_i(T^{\circ t}(\omega )) )$ and
$( X_i(T^{\circ t+d}(\omega')),X_i(T^{\circ {t+d-1}}(\omega')),\ldots, \linebreak  X_i(T^{\circ t}(\omega')) )$ are different.

From this one easily sees that for $k=n-1$ if $t=n$, and for $k = t$ otherwise, ordinal patterns of order $(d+1)$ of vectors
$( X_i(T^{\circ k+d+1}(\omega )), X_i(T^{\circ {k+d}}(\omega )),\ldots, \linebreak X_i(T^{\circ k}(\omega )) )$ and
$( X_i(T^{\circ k+d+1}(\omega')), X_i(T^{\circ {k+d}}(\omega')),\ldots,	      	   X_i(T^{\circ k}(\omega')) )$ are different.
This shows that $\omega,\omega'\in\Omega$ are in different sets of the partition ${\mathcal P}(d+1)_{n-1}$.
Therefore, ${\mathcal P}(d+1)_{n-1}$ is finer than ${\mathcal P}(d)_n$.

We also have that ${\mathcal P}(d+2)_{n-2}$ is finer than ${\mathcal P}(d+1)_{n-1}$,
${\mathcal P}(d+3)_{n-3}$ is finer than ${\mathcal P}(d+2)_{n-2}$, .\,.\,. ,
and ${\mathcal P}(d+n-1)={\mathcal P}(d+n-1)_1$ is finer than ${\mathcal P}(d+n-2)_2$.
This provides \eqref{finer}.
\end{proof}

\subsection{KS entropy on the basis of ordinal partitions}
Our discussion of the relation of KS entropy and permutation entropy was based on equality \eqref{linkToKS}.
In Theorems \ref{choice1}, \ref{choice2} and \ref{choice3} we summarize statements from \cite{KellerSinn2009, KellerSinn2010, Keller2011} guaranteeing this equality.
\begin{theorem}\label{choice1}
    For $N\in {\mathbb N}$, let ${\bf X}=(X_1,X_2,\ldots , X_N)$ be a random vector on $(\Omega,{\mathbb B}(\Omega))$.
    Then $h_\mu (T)=\lim\limits_{d\to\infty} h_\mu (T,{\mathcal P}^{\bf X}(d))$ is valid in each of the following two cases.
    \begin{enumerate}
	  \item[(i)] $\Omega$ is a Borel subset of ${\mathbb R}^N$ and $X_i$ is the $i$-th coordinate projection for $i=1,2,\ldots ,N$,
		     i.e.~$X_i((\omega_1,\omega_2,\ldots ,\omega_N))\!=\!\omega_i$ for $(\omega_1,\omega_2,\ldots,\omega_N)\in\Omega $.
	  \item[(ii)] $\Omega$ is a compact Hausdorff space, $X_i$ is continuous for $i=1,2,\ldots ,N$, and ${\bf X}$ is injective.
    \end{enumerate}
\end{theorem}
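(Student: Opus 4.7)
The plan is to prove in each case that the sequence of ordinal partitions $({\mathcal P}^{\bf X}(d))_{d \in {\mathbb N}}$ is \emph{generating} in the measure-theoretic sense: the sigma-algebra $\sigma\bigl(\bigcup_{d,n\in {\mathbb N}} {\mathcal P}^{\bf X}(d)_n\bigr)$ coincides with $\mathbb{B}(\Omega)$ modulo $\mu$-null sets. Since $({\mathcal P}^{\bf X}(d))_{d}$ is increasing in $d$ by Lemma \ref{finerlem} and $({\mathcal P}^{\bf X}(d)_n)_n$ is increasing in $n$ by construction, this generating property together with the standard Kolmogorov--Sinai limit theorem for refining sequences of partitions immediately yields the desired equality $h_\mu(T)=\lim_{d\to\infty} h_\mu(T, {\mathcal P}^{\bf X}(d))$.

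First I would reduce (ii) to (i). In case (ii), the map ${\bf X}\colon \Omega \to {\mathbb R}^N$ is continuous and injective on a compact Hausdorff space, hence a homeomorphism onto its compact image ${\bf X}(\Omega) \subset {\mathbb R}^N$. The push-forward system $({\bf X}(\Omega), \mathbb{B}({\bf X}(\Omega)), \mu\circ {\bf X}^{-1}, {\bf X}\circ T\circ {\bf X}^{-1})$ is measure-theoretically isomorphic to $(\Omega,\mathbb{B}(\Omega),\mu,T)$, and the ordinal partitions ${\mathcal P}^{\bf X}(d)$ correspond under this isomorphism to the ordinal partitions built from the coordinate projections on ${\bf X}(\Omega)$. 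Hence both the KS entropies and both the limit entropies coincide on the two sides, and case (i) applied to the push-forward settles case (ii).

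For case (i), I would establish the generating property directly. The key observation is that for each coordinate index $i\in\{1,\ldots,N\}$ and each threshold $t\in {\mathbb R}$, the half-space cylinder $\{\omega\in\Omega\,\mid\,X_i(\omega)<t\}$ can be approximated up to $\mu$-measure $\varepsilon$ by unions of cells from $\bigvee_{k=0}^{n-1} T^{-k}{\mathcal P}^{\bf X}(d)$, for $d$ and $n$ large enough. The idea is that whenever the orbit of a typical $\omega$ visits, via some iterate $T^k$, a value $X_i(T^k\omega)$ close to $t$, the ordinal pattern of the vector $(X_i(T^k\omega),\ldots,X_i(\omega))$ records whether $X_i(\omega)$ lies above or below $X_i(T^k\omega)$, and so above or below $t$ up to an error shrinking with $d$. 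To control how often such a ``probe'' visit occurs I would invoke Poincar\'e recurrence, or Birkhoff's theorem applied to thin annuli around the level $\{X_i=t\}$; this is where the measure-preserving property of $T$ enters decisively. Because the Borel sets of $\Omega\subset {\mathbb R}^N$ are generated by countably many such half-space cylinders (rational thresholds suffice), the generating property then follows from a standard diagonal argument.

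The main obstacle will be the quantitative separation step: making precise that ordinal patterns of iterated orbit segments asymptotically distinguish $\mu$-almost every pair of distinct points. This calls for care with the push-forward of $\mu$ under each $X_i$, which may carry atoms (justifying the tie-breaking convention \eqref{equalco}), and with the null set of ``bad'' $\omega$ whose orbit never approaches the relevant thresholds. Once the separation is established, the upper bound $\lim_d h_\mu(T,{\mathcal P}^{\bf X}(d)) \leq h_\mu(T)$ is immediate from the supremum definition of KS entropy, and the reverse inequality follows by approximating an arbitrary finite Borel partition $\mathcal{Q}$ by cells of the generated sub-sigma-algebra and invoking the continuity of $h_\mu(T,\cdot)$ under refinement by a generating increasing sequence.
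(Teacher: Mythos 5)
First, note that the paper itself gives no proof of Theorem~\ref{choice1}: it is explicitly presented as a summary of results imported from \cite{KellerSinn2009,KellerSinn2010,Keller2011}, so your attempt can only be measured against the arguments in those references. Your reduction of (ii) to (i) is sound and is essentially how it is done there: a continuous injection of a compact Hausdorff space into ${\mathbb R}^N$ is a homeomorphism onto its image, the pushed-forward system is isomorphic, and the ordinal partitions correspond. The problem lies in your treatment of case (i).

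The central claim of your plan --- that $\sigma\bigl(\bigcup_{d,n}{\mathcal P}^{\bf X}(d)_n\bigr)$ equals ${\mathbb B}(\Omega)$ modulo $\mu$-null sets --- is false in general, so the ``generating partition'' route cannot work as stated. Take $\Omega=[0,1]$, $\mu$ Lebesgue, $X$ the identity and $T=\mathrm{id}$: every vector $(X(T^{\circ d}\omega),\ldots,X(\omega))$ has all entries equal, the tie-breaking rule \eqref{equalco} assigns every point the same ordinal pattern, and every ${\mathcal P}^{\bf X}(d)_n$ is the trivial partition $\{\Omega\}$; yet ${\mathbb B}([0,1])$ is far from trivial mod $\mu$. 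The theorem nevertheless holds there ($0=0$), which shows the conclusion is not equivalent to generation. The same obstruction appears, non-degenerately, for any non-ergodic $T$: your ``probe'' argument needs $\mu$-a.e.\ orbit to visit a neighbourhood of every threshold level of positive measure, but Poincar\'e recurrence only returns points to a set they already lie in, and the set of $\omega$ whose orbit never approaches a given threshold can have \emph{positive} measure (the orbit may be trapped in an invariant set on which $X_i$ is bounded away from $t$). Since the paper assumes only that $T$ is measure-preserving, not ergodic, this is a genuine gap, not a technicality. The actual proofs in the cited references repair exactly this point: the Birkhoff averages $\frac1n\#\{t\le n: X_i(T^{\circ t}\omega)\le X_i(\omega)\}$ recover only the \emph{conditional} distribution function of $X_i$ given the $\sigma$-algebra ${\mathcal I}$ of $T$-invariant sets, so one proves generation of ${\mathbb B}(\Omega)$ only modulo ${\mathcal I}$ (i.e.\ ${\mathbb B}(\Omega)\subseteq\sigma(\bigcup_{d,n}{\mathcal P}^{\bf X}(d)_n)\vee{\mathcal I}$ mod $\mu$), and then one needs the additional, non-obvious ingredient that adjoining ${\mathcal I}$ does not change entropy rates, because invariant sets carry no dynamical information. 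Your outline is missing both the weakened generation statement and this last step; with them inserted, the rest of your scheme (upper bound from the supremum definition, lower bound via conditional-entropy approximation of an arbitrary finite partition) does go through.
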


In special cases, according to Takens' embedding theory, the KS entropy of a multidimensional system can be obtained from only a one-dimensional measurement.
For the background, see Takens \cite{Takens81} and Sauer \cite{Sauer91}.

Let us recall that a property is said to be {\it generic} for a topological space if it holds for all points of an open dense subset of the space.
Moreover, a property is called {\it prevalent} for a topological vector space $V$ over ${\mathbb R}$ if it holds for all points of a Borel set $A$ with the following property:
There exists a finite-dimensional subspace $W$ of $V$ such that for all $v\in V$, the point
$v+w$ belongs to $A$ for Lebesgue-a.a.~$w\in W$.
Note that for a finite-dimensional $V$ of dimension $m$ a property is prevalent if it holds for Lebesgue-a.a.~points of $V$. 
This follows from from Fubini's theorem and shows that prevalence is a generalization of
Lebesgue-a.a.~to the infinite-dimensional case.

\begin{theorem}\label{choice2}
    Let $\Omega$ be a compact $C_2$-manifold of some dimension $m\in {\mathbb N}$.
    Then for the set of pairs $(S,{\bf X})$ of $C_2$-diffeomorphisms $S\!:\Omega\!\hookleftarrow$ and
    $C_2$-maps ${\bf X}:\Omega\rightarrow {\mathbb R}$ equipped with the $C_1$-topology,
    the following property is generic:

    If $\nu\!:\mathbb{B}(\Omega) \rightarrow [0,1]$ is an $S$-invariant probability measure,
    then $h_\nu(S)=\!\lim\limits_{d\to\infty}\! h_\nu(S,{\mathcal P}^{\bf X}(d))$.
\end{theorem}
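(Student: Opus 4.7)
My plan is to reduce Theorem~\ref{choice2} to Theorem~\ref{choice1}(ii) via Takens' embedding theorem. I would set $M := 2m + 1$ and invoke a suitable version of Takens' theorem asserting that, generically in the $C_1$-topology on pairs $(S, X)$ of $C_2$-diffeomorphisms $S:\Omega\hookleftarrow$ and $C_2$-maps $X:\Omega\to\mathbb{R}$, the delay-coordinate map
\begin{equation*}
\Phi(\omega) = (X(\omega), X(S(\omega)), \ldots, X(S^{\circ M-1}(\omega)))
\end{equation*}
is a $C_2$-embedding of $\Omega$ into $\mathbb{R}^M$; in particular $\Phi$ is continuous and injective. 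The key point to emphasise is that the generic set $\mathcal{G}$ of such pairs is produced without reference to any invariant measure, so whatever conclusion is obtained from it will hold for every $S$-invariant probability measure $\nu$ simultaneously.

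For a fixed $(S, X) \in \mathcal{G}$ the next step is to introduce the auxiliary random vector $\tilde{\mathbf{X}} := (\tilde X_1, \ldots, \tilde X_M)$ with $\tilde X_i := X \circ S^{\circ i-1}$. Each $\tilde X_i$ is continuous since $X$ and $S$ are $C_2$, and $\tilde{\mathbf{X}} = \Phi$ is injective by Takens, so Theorem~\ref{choice1}(ii) applies to $\tilde{\mathbf{X}}$ on the compact Hausdorff space $\Omega$ and yields
\begin{equation*}
h_\nu(S) = \lim_{d \to \infty} h_\nu(S, \mathcal{P}^{\tilde{\mathbf{X}}}(d)).
\end{equation*}

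The remaining step transfers this limit from $\tilde{\mathbf{X}}$ back to the scalar observable $X$ by a refinement sandwich. Since $\tilde X_1 = X$ is a coordinate of $\tilde{\mathbf{X}}$, the partition $\mathcal{P}^{\tilde{\mathbf{X}}}(d)$ refines $\mathcal{P}^X(d)$. Conversely, the single ordinal pattern of the length-$(d+M)$ vector $(X(S^{\circ d+M-1}(\omega)), \ldots, X(\omega))$ uniquely determines the ordinal patterns of each of its $M$ contiguous length-$(d+1)$ subvectors, which are precisely the patterns labelling cells of $\mathcal{P}^{\tilde{\mathbf{X}}}(d)$; hence $\mathcal{P}^X(d + M - 1)$ refines $\mathcal{P}^{\tilde{\mathbf{X}}}(d)$. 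Monotonicity of the entropy rate under refinement gives the sandwich
\begin{equation*}
h_\nu(S, \mathcal{P}^X(d)) \leq h_\nu(S, \mathcal{P}^{\tilde{\mathbf{X}}}(d)) \leq h_\nu(S, \mathcal{P}^X(d + M - 1)),
\end{equation*}
and, since each of these sequences is non-decreasing in $d$ by Lemma~\ref{finerlem}, the outer terms share a common limit equal to $h_\nu(S)$.

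The hard part is the embedding input itself: one must cite (or adapt) Takens' theorem in precisely the regularity and topology setting of the statement---open--dense in the $C_1$-topology on pairs of $C_2$ objects---and confirm that the exceptional set of non-embedding pairs is selected independently of $\nu$. Once the embedding is in place, the comparison between $\mathcal{P}^X$ and $\mathcal{P}^{\tilde{\mathbf{X}}}$ is an elementary ordinal-combinatorial remark, and the final equality is extracted from the squeeze by monotonicity, with no further analytic work.
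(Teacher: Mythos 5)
The paper does not prove Theorem~\ref{choice2} at all: it is explicitly imported as a summary of results from \cite{KellerSinn2009,KellerSinn2010,Keller2011}, with only the surrounding remark that Takens' embedding theory is the relevant input. Your reduction --- Takens' theorem in its original form (generic in the $C_1$-topology on pairs of $C_2$ objects, $M=2m+1$ delay coordinates, exceptional set chosen independently of any invariant measure) to get a continuous injective $\tilde{\mathbf{X}}=(X\circ S^{\circ i-1})_{i=1}^{M}$, then Theorem~\ref{choice1}(ii) on the compact Hausdorff space $\Omega$, then the refinement sandwich ${\mathcal P}^{X}(d)\preceq{\mathcal P}^{\tilde{\mathbf{X}}}(d)\preceq{\mathcal P}^{X}(d+M-1)$ with monotonicity of the entropy rate --- is exactly the argument of the cited sources, and each step checks out (in particular the claim that the order-$(d+M-1)$ pattern determines the $M$ contiguous order-$d$ subpatterns is the same observation as in Lemma~\ref{finerlem}, including the tie-breaking convention \eqref{equalco}, which is preserved under passing to contiguous subvectors).
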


\begin{theorem}\label{choice3}
    Assume that $k\in {\mathbb N}$, that $\Omega$ is a compact set contained in some open subset $U\subset {\mathbb R}^k$, that $\Omega$ has box
    dimension $m$, and that $T$ is the restriction of a $C_1$-diffeomorphism $\widetilde{T}$ on $U$ to $\Omega$.

    Further, assume the existence of some $N>2m$ in ${\mathbb N}$ such that for each $p\leq N$ in ${\mathbb N}$
    the set $\Omega_p$ of periodic points in $\Omega$ of period $p$ has the following properties:
    \begin{multline*}
	    \hspace{1cm}\text{The box dimension of }\Omega_p\text{ is less than }\frac{p}{2},\\
	    \shoveleft{\hspace{1cm} \text{the linearization of $\widetilde{T}^{\circ p}$ at each $\omega \in \Omega_p$ has distinct eigenvalues.} }
    \end{multline*}
    Then for the vector space of real-valued $C_1$-maps ${\bf X}$ equipped with the $C_1$-topology the property $h_\mu(T)=\lim\limits_{d\to\infty} h_\mu(T,{\mathcal P}^{\bf X}(d))$ is prevalent.
\end{theorem}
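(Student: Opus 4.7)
The plan is to reduce Theorem~\ref{choice3} to Theorem~\ref{choice1}(ii) by means of a Takens-type prevalence embedding, followed by a refinement comparison between the one-observable ordinal partition and an auxiliary $N$-observable one.

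First, for a real-valued $C_1$-map $X$ on $U$ consider the delay-coordinate map $\Phi^X_N:\Omega\to{\mathbb R}^N$ defined by $\omega\mapsto(X(\omega),X(T(\omega)),\ldots,X(T^{\circ N-1}(\omega)))$. The hypotheses of Theorem~\ref{choice3} — the box-dimension condition on $\Omega$, the box-dimension bound and the simple-eigenvalue condition on the periodic sets $\Omega_p$ for $p\leq N$, and $N>2m$ — are precisely those of the $C_1$-prevalence version of Takens' embedding theorem established by Sauer, Yorke and Casdagli. By that theorem, the set of $C_1$-observables $X$ for which $\Phi^X_N$ fails to be injective on $\Omega$ is contained in a shy Borel subset of the $C_1$-topology; equivalently, $\Phi^X_N$ is injective on $\Omega$ for a prevalent set of $X$.

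Now fix such an $X$ and set ${\bf Y}=(Y_1,Y_2,\ldots,Y_N)$ with $Y_i:=X\circ T^{\circ i-1}$. Each $Y_i$ is continuous on the compact Hausdorff space $\Omega$, and ${\bf Y}$ is injective (being essentially $\Phi^X_N$). Theorem~\ref{choice1}(ii) therefore yields
\begin{equation*}
    h_\mu(T)=\lim_{d\to\infty} h_\mu(T,{\mathcal P}^{\bf Y}(d)).
\end{equation*}
To transfer this to the one-observable partition, observe that the cell of ${\mathcal P}^{\bf Y}(d)$ containing $\omega$ is determined by the $N$ ordinal patterns of the vectors $(X(T^{\circ d+i-1}(\omega)),\ldots,X(T^{\circ i-1}(\omega)))$ for $i=1,\ldots,N$, each of which is a consecutive subvector of the length-$(d+N)$ vector $(X(T^{\circ d+N-1}(\omega)),\ldots,X(\omega))$ whose ordinal pattern determines the cell of ${\mathcal P}^X(d+N-1)$ containing $\omega$. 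Hence ${\mathcal P}^X(d+N-1)$ refines ${\mathcal P}^{\bf Y}(d)$, so $h_\mu(T,{\mathcal P}^X(d+N-1))\geq h_\mu(T,{\mathcal P}^{\bf Y}(d))$. Letting $d\to\infty$, and using the general bound $h_\mu(T,{\mathcal P}^X(d))\leq h_\mu(T)$ from the definition of KS entropy, one concludes $h_\mu(T)=\lim_{d\to\infty} h_\mu(T,{\mathcal P}^X(d))$ for every $X$ in a prevalent set, which proves the theorem.

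The main obstacle is locating and invoking a $C_1$-prevalence version of Takens' embedding theorem whose hypotheses align exactly with those of Theorem~\ref{choice3}, and checking that the non-injectivity set is Borel in the $C_1$-topology so that ``prevalent'' there coincides with the notion defined in this paper. Once the embedding step is in place, the refinement comparison and the appeal to Theorem~\ref{choice1}(ii) are routine.
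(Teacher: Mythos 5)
Your proposal is correct and is essentially the argument behind the paper's statement: the paper itself gives no proof of Theorem~\ref{choice3} (it is quoted from \cite{KellerSinn2010, Keller2011}), and the proof there proceeds exactly as you do, via the Sauer--Yorke--Casdagli $C_1$-prevalence delay-embedding theorem (whose hypotheses the theorem's assumptions are copied from) to get an injective continuous delay vector ${\bf Y}$, followed by reduction to the injective-observable case of Theorem~\ref{choice1}(ii). A small streamlining of your last step: ${\mathcal P}^{\bf Y}(d)$ coincides with ${\mathcal P}^{X}(d)_N$, so $h_\mu(T,{\mathcal P}^{\bf Y}(d))=h_\mu(T,{\mathcal P}^{X}(d))$ and the comparison with ${\mathcal P}^{X}(d+N-1)$ is not needed.
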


\subsection{Another concept of permutation entropy}
As mentioned in the Introduction, Amigo et al.~\cite{AmigoKennelKocarev2005,Amigo2012} have introduced a version of permutation entropy being
qualitatively different from the originally given one, but being interesting in its own right:
They first show that the KS entropy of a stochastic finite symbol source can be considered as a permutation entropy when the symbols are totally ordered arbitrarily
(see \cite{AmigoKennelKocarev2005}, and, for an alternative proof of the corresponding statement, see Haruna and Nakajima \cite{HarunaNakajima2011}).
Then they take the limit of the permutation entropies for the symbolizations obtained from finer and finer finite partitions of the state space $\Omega$.

We do not want to go into the detail, but we give a reformulation of the main result in \cite{Amigo2012} that is based on one-dimensional `observables':
\begin{theorem}
    Let $(X^{(k)})_{k=1}^\infty$ be a sequence of random variables on $(\Omega,{\mathbb B}(\Omega))$ satisfying the following properties:
    \begin{enumerate}
	\item[(i)] $X^{(k)}(\Omega)$ is finite for all $k\in {\mathbb N}$.
	\item[(ii)] For $k,k'\in {\mathbb N}$ with $k<k'$ and $\omega_1,\omega_2\in \Omega$ it holds $X^{(k')}(\omega_1)<X^{(k')}(\omega_2)$ if $X^{(k)}(\omega_1)<X^{(k)}(\omega_2)$.
	\item[(iii)] For all $\omega_1,\omega_2\in\Omega$ there exists some $k\in {\mathbb N}$ with $X^{(k)}(\omega_1)\neq X^{(k)}(\omega_2)$.
    \end{enumerate}
    Then $h_\mu (T)=\lim\limits_{k\to\infty} h_\mu^{X^{(k)}}(T)$.
\end{theorem}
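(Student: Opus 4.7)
The plan is to factor the claimed equality through the finite partitions $\mathcal{Q}^{(k)}$ of $\Omega$ consisting of the non-empty level sets of $X^{(k)}$. By hypothesis (i), each $\mathcal{Q}^{(k)}$ is finite. Hypothesis (ii) forces equalities to propagate back from $X^{(k')}$ to $X^{(k)}$ for $k<k'$: if $X^{(k')}(\omega_1) = X^{(k')}(\omega_2)$ then neither $X^{(k)}(\omega_1) < X^{(k)}(\omega_2)$ nor $X^{(k)}(\omega_2) < X^{(k)}(\omega_1)$ can hold (either would imply the corresponding strict inequality at level $k'$ by (ii)), so $X^{(k)}(\omega_1) = X^{(k)}(\omega_2)$, giving that $\mathcal{Q}^{(k+1)}$ refines $\mathcal{Q}^{(k)}$. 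Hypothesis (iii) then says that the family $(\mathcal{Q}^{(k)})_k$ separates points of $\Omega$; under the regularity assumptions on $(\Omega,\mu)$ implicit in the cited works one concludes $\sigma\bigl(\bigcup_k \mathcal{Q}^{(k)}\bigr) = \mathbb{B}(\Omega)$ modulo $\mu$-null sets.

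The proof then reduces to two equalities. First, for each fixed $k$ the process $\bigl(X^{(k)}\circ T^{\circ n}\bigr)_{n\geq 0}$ is a symbolic process over the finite totally ordered alphabet $X^{(k)}(\Omega)\subset\mathbb{R}$. The theorem of Amigo, Kennel and Kocarev \cite{AmigoKennelKocarev2005} (see also Haruna and Nakajima \cite{HarunaNakajima2011}) asserts that for such a process the permutation entropy computed via ordinal patterns with respect to any total order coincides with the Shannon entropy rate of the symbolic process, which in the notation of this paper reads
\begin{equation*}
    h_\mu^{X^{(k)}}(T) = h_\mu(T,\mathcal{Q}^{(k)}).
\end{equation*}
Second, since $(\mathcal{Q}^{(k)})_k$ is an increasing sequence of finite partitions whose join generates $\mathbb{B}(\Omega)$ modulo $\mu$, the Kolmogorov-Sinai theorem yields
\begin{equation*}
    h_\mu(T) = \lim_{k\to\infty} h_\mu(T,\mathcal{Q}^{(k)}).
\end{equation*}
Combining these two equalities gives the claim.

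The main obstacle is the first of the two equalities: the Amigo-Kennel-Kocarev identity is a priori striking, because ordinal patterns retain only the order structure of a tuple, strictly less information than the tuple itself (e.g.\ the patterns of $(1,2)$ and $(2,3)$ coincide), yet asymptotically they still recover the full Shannon entropy rate. The inequality $h_\mu^{X^{(k)}}(T) \leq h_\mu(T,\mathcal{Q}^{(k)})$ is immediate because $\mathcal{P}^{X^{(k)}}(d)$ is coarser than $\mathcal{Q}^{(k)}_{d+1}$; the reverse inequality is the substantive content, resting on the fact that the number of length-$(d+1)$ words over $X^{(k)}(\Omega)$ collapsing to a single ordinal pattern grows only subexponentially in $d$, so the corresponding overcount disappears in the $1/d$ scaling defining the entropy rate. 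In the present setting this deep fact is invoked as a black box via the citation, and the remainder of the proof is then a routine packaging through the classical Kolmogorov-Sinai theorem.
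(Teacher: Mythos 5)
The paper does not actually prove this theorem: it is presented explicitly as a ``reformulation of the main result in \cite{Amigo2012}'' and no argument is given, so there is no internal proof to compare against. Judged on its own, your decomposition is the right one and matches how the result is established in the cited literature: (ii) does imply that the level-set partitions $\mathcal{Q}^{(k)}$ are increasing, (iii) gives separation of points, the inequality $h_\mu^{X^{(k)}}(T)\le h_\mu(T,\mathcal{Q}^{(k)})$ follows because $\mathcal{P}^{X^{(k)}}(d)$ is coarser than $\mathcal{Q}^{(k)}_{d+1}$, and the heuristic you give for the reverse inequality is the correct mechanism --- the fibre of the map from length-$(d{+}1)$ words over an alphabet of size $s$ to ordinal patterns has cardinality at most $\binom{d+s}{s-1}$, polynomial in $d$, so $H(\mathcal{Q}^{(k)}_{d+1})-H(\mathcal{P}^{X^{(k)}}(d))\le (s-1)\ln(d+s)$ and the discrepancy dies under division by $d$. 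Indeed, you could have made this step self-contained with that two-line counting argument instead of invoking \cite{AmigoKennelKocarev2005} as a black box; as written, since the theorem being proved \emph{is} essentially the cited result, the proposal is closer to a reduction to the literature than to an independent proof, which is acceptable here but worth being honest about.

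Two genuine soft spots remain. First, your passage from ``$(\mathcal{Q}^{(k)})_k$ separates points'' to ``$\sigma\bigl(\bigcup_k\mathcal{Q}^{(k)}\bigr)=\mathbb{B}(\Omega)$ mod $\mu$'' is not automatic: the paper only assumes $\Omega$ is a topological space, and a countable separating family of Borel sets generates the Borel $\sigma$-algebra only under additional hypotheses (e.g.\ $\Omega$ standard Borel). You flag this as ``regularity assumptions implicit in the cited works,'' which is fair, but it should be stated as an explicit hypothesis rather than waved at. Second, the theorem of \cite{AmigoKennelKocarev2005} is stated there for \emph{ergodic} sources, while the theorem here carries no ergodicity assumption; to cover the general case you should either appeal to the generalization in \cite{Amigo2012} or, better, note that the subexponential-fibre argument above proves $h_\mu^{X^{(k)}}(T)=h_\mu(T,\mathcal{Q}^{(k)})$ with no ergodicity at all. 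With those two points repaired, the proof is complete.
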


\section{Proof of Theorem 1}\label{proof}

The following discussion is aimed to characterize coincidence of the quantities $h$ and $h^\ast$ given in Section \ref{framework}.
We start with the generally valid inequality between $h$ and $h^\ast$.
\begin {lemma}\label{lemma3}
    It holds $h\leq h^\ast$.
\end {lemma}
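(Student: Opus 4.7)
The plan is to read off the inequality directly from the structural bound \eqref{finer2}, namely $H_n(d) \le H_1(d+n-1)$, which was already derived from the refinement hypothesis \eqref{finer}. Once this is in hand, the rest is a matter of dividing by the right normalizing factor and taking iterated limits in the correct order.

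First I would fix $d \in \mathbb{N}$ and, using \eqref{finer2}, write
\begin{equation*}
    \frac{H_n(d)}{n} \;\le\; \frac{H_1(d+n-1)}{n} \;=\; \frac{H_1(d+n-1)}{d+n-1}\cdot\frac{d+n-1}{n}.
\end{equation*}
The second factor on the right tends to $1$ as $n \to \infty$, while the first factor is a subsequence of $\bigl(\tfrac{H_1(m)}{m}\bigr)_{m\in\mathbb{N}}$ and hence has limit superior bounded by $h^\ast = \varlimsup_{d\to\infty}\tfrac{H_1(d)}{d}$. Since the left-hand side actually converges (by \eqref{hd}), taking $n \to \infty$ yields $h(d) \le h^\ast$.

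Next I would simply let $d \to \infty$. By \eqref{h} the left-hand side converges to $h$, while the right-hand side is a constant independent of $d$, so we conclude $h \le h^\ast$, as required.

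There is no serious obstacle here; the only point needing a bit of care is that one cannot pass to a plain $\lim$ in $\tfrac{H_1(d+n-1)}{d+n-1}$ (this sequence may not converge), which is why I work with $\varlimsup$ and exploit the fact that $\lim_{n\to\infty}\tfrac{H_n(d)}{n}$ does exist so that inequalities carry through. This makes the lemma a direct, two-line consequence of \eqref{finer2} combined with the definitions \eqref{hd}, \eqref{h}, and \eqref{hast}.
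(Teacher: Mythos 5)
Your proof is correct and rests on exactly the same ingredients as the paper's: the key inequality \eqref{finer2}, $H_n(d)\le H_1(d+n-1)$, together with the fact that $(d+n-1)/n\to 1$ as $n\to\infty$. The paper packages this as an explicit $\alpha$--$\beta$ approximation argument, whereas you carry it out directly with limsup arithmetic, but the substance is identical.
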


\begin{proof}
It can be assumed that $h>0$. We fix some $\alpha >0$ with $h>\alpha$ and show that $h^\ast\geq\alpha$. Since $\alpha$ can be chosen arbitrarily near to $h$, this implies $h^\ast\geq h$.

Given $\beta >1$ with $h>\beta\,\alpha$, by \eqref{hd} and \eqref{h} there exists
some $d\in {\mathbb N}$ and some $n_d\in {\mathbb N}$ with $\frac{H_n(d)}{n}>\beta\,\alpha$ for all $n\geq n_d$.
Thus for all $n\geq \max \{n_d,\frac{d}{\beta-1}\}$ by \eqref{finer2} we obtain
\begin{align*}
    \frac{H_1(d+n-1)}{d+n-1}&\geq\frac{H_n(d)}{d+n-1}\\
			    &\geq\frac{H_n(d)}{(\beta-1)n+n-1}\\
			    &>\frac{H_n(d)}{\beta n}\\
			    &>\alpha,
\end{align*}
implying $h^\ast=\varlimsup\limits_{n \to \infty}\frac{H_1(d+n-1)}{d+n-1}\geq\alpha$.
\end{proof}

If $h=h^\ast$, then the upper limit in the definition of $h^\ast$ can be replaced by the usual limit as the following lemma shows.
\begin {lemma} If $h=h^\ast$, then
      \begin {equation}
	    h^\ast = \lim_{d \to \infty} \frac{H_1(d)}{d},
	    \label{eq:PE_without_sup}
      \end {equation}
      \label{th:PE_without_sup}
\end {lemma}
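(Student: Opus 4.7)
The plan is to show $\liminf_{d\to\infty}\frac{H_1(d)}{d}\geq h^\ast$, since the reverse inequality holds by definition of $\varlimsup$, and hence the limit exists and equals $h^\ast$. The assumption $h=h^\ast$ will enter only through the equivalent requirement $\liminf_{d\to\infty}\frac{H_1(d)}{d}\geq h$.

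First I would fix $\varepsilon>0$. By the definition of $h$ in \eqref{h}, the sequence $h(d)$ converges to $h$, so I can choose some $d_0\in{\mathbb N}$ with $h(d_0)>h-\varepsilon$. Next, by \eqref{hd} applied to this fixed $d_0$, the ratio $\frac{H_n(d_0)}{n}$ tends to $h(d_0)$ as $n\to\infty$, so there is $n_0$ with $\frac{H_n(d_0)}{n}>h(d_0)-\varepsilon>h-2\varepsilon$ for all $n\geq n_0$.

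The key inequality is \eqref{finer2}, which yields $H_1(d_0+n-1)\geq H_n(d_0)$. Dividing by $d_0+n-1$ and rewriting,
\begin{equation*}
    \frac{H_1(d_0+n-1)}{d_0+n-1}\;\geq\;\frac{H_n(d_0)}{d_0+n-1}\;=\;\frac{H_n(d_0)}{n}\cdot\frac{n}{d_0+n-1}.
\end{equation*}
Since $\frac{n}{d_0+n-1}\to 1$ as $n\to\infty$ while $\frac{H_n(d_0)}{n}$ eventually exceeds $h-2\varepsilon$, taking $\liminf_{n\to\infty}$ on both sides gives $\liminf_{n\to\infty}\frac{H_1(d_0+n-1)}{d_0+n-1}\geq h-2\varepsilon$. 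Because the indices $d_0+n-1$ run through all sufficiently large integers, this yields $\liminf_{d\to\infty}\frac{H_1(d)}{d}\geq h-2\varepsilon$.

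As $\varepsilon>0$ was arbitrary, $\liminf_{d\to\infty}\frac{H_1(d)}{d}\geq h=h^\ast$. Combined with $\varlimsup_{d\to\infty}\frac{H_1(d)}{d}=h^\ast$ by \eqref{hast}, the limit exists and equals $h^\ast$, establishing \eqref{eq:PE_without_sup}. There is no real obstacle here; the only subtle point is recognizing that one must pass the convergence through two nested limits (first in $n$ for a fixed $d_0$ via \eqref{finer2}, then in $d_0$), and that the weight $\frac{n}{d_0+n-1}$ does not spoil the estimate because $d_0$ is held fixed while $n\to\infty$.
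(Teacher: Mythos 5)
Your proposal is correct and follows essentially the same route as the paper: both arguments fix a single order $k$ (your $d_0$), use \eqref{finer2} to transfer the lower bound $h(k)$ from $\varliminf_{n}\frac{H_n(k)}{k+n-1}$ to $\varliminf_{d}\frac{H_1(d)}{d}$, and then let $k\to\infty$ to conclude $\varliminf_{d}\frac{H_1(d)}{d}\geq h=h^\ast$. The only difference is cosmetic: you unfold the nested limits with explicit $\varepsilon$'s where the paper manipulates the $\varliminf$'s directly.
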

\begin{proof}
Given some $k\in \mathbb{N}$, by \eqref{finer2} it holds
\begin {equation*}
      \frac{H_1(k + n - 1)}{k + n - 1} \geq \frac{H_n(k)}{k + n - 1}
\end {equation*}
for all $n\in \mathbb{N}$, implying
\begin {equation*}
      \varliminf_{d \to \infty} \frac{H_1(d)}{d} = \varliminf_{n \to \infty} \frac{H_1(k+n-1)}{k+n-1} \geq \varliminf_{n \to \infty} \frac{H_n(k)}{k + n - 1} = h(k).
\end {equation*}
Therefore, we have
\begin {equation*}
      \varliminf_{d \to \infty} \frac{H_1(d)}{d} \geq \lim_{k \to \infty} h(k)=h = h^\ast = \varlimsup_{d \to \infty} \frac{H_1(d)}{d},
\end {equation*}
which shows \eqref{eq:PE_without_sup}.
\end {proof}

We come now to the proof of Theorem \ref{the01}.
We show equivalence of (i), (ii) and the following statement (iii') being equivalent to (iii).
\begin{itemize}
	\item [(iii')] For each $\varepsilon > 0$ there exists some $d_\varepsilon \in \mathbb{N}$ such that
		       for all $d \geq d_\varepsilon$ there is some $n_d \in \mathbb{N}$ with
		       \begin {equation*}
			    H_1(d + n - 1) - H_n(d) < (d+n - 1)\varepsilon \text{ for all }n \geq n_d.
		       \end {equation*}
\end{itemize}

Clearly, (iii) is stronger than (iii'). On the other hand, assume that (iii') is valid and $\varepsilon > 0$ is given. Fix some
$d_\varepsilon \in \mathbb{N}$ and for each $d \geq d_\varepsilon$ some $n_d \in \mathbb{N}$ with $n_d>d$ and
\begin {equation*}
    H_1(d + n - 1) - H_n(d) < (d+n - 1)\frac{\varepsilon}{2} \text{ for all }n \geq n_d.
\end {equation*}
Then, given $d\geq d_\varepsilon$, for all $n \geq n_d$ it holds
\begin {equation*}
    H_1(d + n - 1) - H_n(d) < ((n-1)+n - 1)\frac{\varepsilon}{2}= (n - 1)\varepsilon.
\end {equation*}

\noindent$(i) \Rightarrow (ii)$:
let $h=h^\ast$, let $\varepsilon>0$, and set $M_\varepsilon := h + \varepsilon/2$.
By Lemma \ref{th:PE_without_sup} and \eqref{h}, there exists some $d_\varepsilon \in \mathbb{N}$ such that
for all $d \geq d_\varepsilon$ it holds
\begin {equation}\label{heq1}
      \frac{H_1(d)}{d}<M_\varepsilon
\end {equation}
and
\begin {equation}\label{heq2}
      h - \frac{\varepsilon}{4} < h(d)
      \text{.}
\end {equation}

From \eqref{heq1} for all $d \geq d_\varepsilon$ and from \eqref{finer2} it follows
\begin {equation*}
      \frac{H_n(d)}{d + n - 1} \leq \frac{H_1(d + n - 1)}{d + n - 1} < M_\varepsilon
\end {equation*}
for all $d \geq d_\varepsilon$ and for all $n\in \mathbb{N}$, showing (\ref{eq:T1_statement1}).

Given $d\geq d_\varepsilon$, statement (\ref{eq:T1_statement1}) and inequality \eqref{heq2} imply existence of some $n_d\in\mathbb{N}$ with
$n_d \geq \frac{4 M_\varepsilon (d-1)}{\varepsilon}$ and
\begin {align*}
      h - \frac{\varepsilon}{4} &< \frac{H_n(d)}{n} = \frac{H_n(d)}{d + n - 1} + \frac{H_n(d)}{d + n - 1} \frac{d-1}{n} \\
				&< \frac{H_n(d)}{d + n - 1} + M_\varepsilon \frac{d-1}{n} 				\\      			
				&\leq \frac{H_n(d)}{d + n - 1} + \frac{\varepsilon}{4}
\end {align*}
for all $n\geq n_d$. Hence we have
\begin {equation*}
      M_\varepsilon - \varepsilon = h - \frac{\varepsilon}{2} < \frac{H_n(d)}{d + n - 1}
      \text{.}
\end {equation*}
for all $n\geq n_d$.
This shows \eqref{eq:T1_statement2}.

\noindent$(ii) \Rightarrow (iii')$:
let $\varepsilon >0$ and let $d_\varepsilon$ and $M_\varepsilon$ with \eqref{eq:T1_statement1} and \eqref{eq:T1_statement2} for all $d\geq d_\varepsilon$ be given.
Setting $n=1$ in \eqref{eq:T1_statement1}, one gets $\frac{H_1(d)}{d}< M_\varepsilon$ for all $d\geq d_\varepsilon$.
In particular, this provides for each $d\geq d_\varepsilon$
\begin{equation*}
\frac{H_1(d+n-1)}{d+n-1}< M_\varepsilon\text{ for all }n\in {\mathbb N}.
\end{equation*}
Combining this with \eqref{eq:T1_statement2} and \eqref{finer2} one obtains that for each $d\geq d_\varepsilon$ there exists some $n_d$ with
\begin {equation*}
      M_\varepsilon - \varepsilon< \frac{H_n(d)}{d + n - 1}\leq \frac{H_1(d+n-1)}{d+n-1}< M_\varepsilon \text{ for all }n \geq n_d.
\end {equation*}
Since $\varepsilon$ can be chosen arbitrarily small, this shows (iii').

\noindent$(iii') \Rightarrow (i)$:
assuming validity of (iii'), for each $\varepsilon > 0$ there exist some $d_\varepsilon \in \mathbb{N}$ such that
for all $d\geq d_\varepsilon$ there is some $n_d \in \mathbb{N}$ with
\begin {equation*}
    \frac{H_1(d + n - 1)}{d + n - 1} \leq \frac{H_n(d)}{d + n - 1}+\varepsilon
\end {equation*}
for all $n\geq n_d$.	
For $d\geq d_\varepsilon$ this implies
\begin {equation*}
    h^\ast=\varlimsup_{n \to \infty}\frac{H_1(d + n - 1)}{d + n - 1} \leq\varlimsup_{n \to \infty}\frac{H_n(d)}{d + n - 1}+\varepsilon =h(d)+\varepsilon,
\end {equation*}
hence by \eqref{h} we have $h^\ast\leq h+\varepsilon$. For $\varepsilon\to 0$, this provides $h^\ast\leq h$.
Now (i) follows by  Lemma \ref{lemma3}.

\section*{Acknowledgment}
This work was supported by the Graduate School for Computing in Medicine and Life Sciences
funded by Germany's Excellence Initiative [DFG GSC 235/1].

\end{document}